\documentclass[12pt]{article}

\usepackage{natbib}
\usepackage{hyperref}
\usepackage{graphicx}
\usepackage{amsmath, amssymb}
\usepackage{authblk} 
\usepackage{orcidlink} 
\usepackage{algorithm2e}
\usepackage{amsthm} 
\usepackage{booktabs}
\usepackage{tabularx}
\theoremstyle{remark}
\newtheorem{remark}{Remark}
\newtheorem{theorem}{Theorem}
\usepackage{arydshln}

\title{Optimal Cut-Point Estimation for Functional Digital Biomarkers: Application to Diabetes Risk Stratification via Continuous Glucose Monitoring}
\author[1,2]{Óscar Lado-Baleato}
\author[3]{Carla Díaz-Louzao}
\author[1,4]{Francisco Gude}
\author[5]{Marcos Matabuena}
\affil[1]{Research Methods Group (RESMET), Health Research Institute of Santiago de Compostela (IDIS), Spain}
\affil[2]{ISCIII Support Platforms for Clinical Research, Spain}
\affil[3]{Department of Mathematics, University of Coruña, Spain}
\affil[4]{Primary Care Centre, Concepción Arenal, Spain}
\affil[5]{Department of Biostatistics, Harvard University, USA}

\date{}

\begin{document}

\maketitle

\begin{center}
\textbf{Keywords:} digital health, optimal cut-point, functional data, continuous glucose monitoring, diabetes
\end{center}


\maketitle
\abstract{Establishing optimal cut-offs for clinical biomarkers is a fundamental statistical problem in epidemiology, clinical trials, and drug discovery. While there is extensive literature regarding the definition of optimal cut-offs for scalar biomarkers, methodologies for analyzing random statistical objects in the more complex spaces associated with random functions and graphs - something increasingly required in the field of modern digital health applications - are lacking. This paper proposes a new, general, simple methodology for defining optimal cut-offs for random objects residing in separable Hilbert spaces. Its underlying motivation is the need to create new, digital health rules for the detection of diabetes mellitus, and thus better exploit the continuous high-dimensional functional information provided by continuous glucose monitors (CGM). A functional cut-off for identifying diabetes is offered, based on glucose distributional representations from CGM time series. This work may be a valuable resource for researchers interested in defining and validating new digital biomarkers for biosensor time series.}

\section{Introduction}
Technological advances in digital health have led to the appearance of small devices that can track different variables associated with human health. Modern, wearable devices and smartphones can now be used to track patients almost continuously, both in the hospital setting and at home. This has elevated healthcare into a new, continuous real-time dimension, improving disease management, diagnostics, and prevention, and potentially reducing healthcare costs across multiple settings \citep{dunn2018wearables}. 

Good examples of modern monitors include Holter devices \citep{del2005history}, which measure heart rate, and continuous glucose monitors (CGM) \citep{juvenile2008continuous}, which provide interstitial glucose concentrations every few minutes (indeed, such devices are revolutionizing continuous-time diabetes management). Physiological, environmental and biochemical data related to physical activity \citep{bunn2018current}, including values for electrocardiogram \citep{lee2018highly}, pulse oximetry and air-pollution variables \citep{pipek2021comparison}, etc., can also be tracked by similar devices. 

Naturally, this scenario requires the definition of new clinical biomarkers associated with high-frequency time series. These 'digital biomarkers' \citep{meister2016digital, babrak2019traditional} can be used to help define time-continuous endpoints. This, however, is not free of challenges in terms of data processing and statistical analysis. In this field, patients' time series are often summarized as statistical objects defined in functional, distributional and graph spaces. 

From a clinical perspective, digital biomarkers can provide a more accurate characterization of physiological states than can traditional diagnostic tests.  They introduce the temporal dimension at multiple points, offering a deeper understanding of metabolic and physiological profiles over extended periods. Traditionally, medical biomarkers are based only on periodic or sporadic blood tests under fasting conditions, on physical examinations, or on questionnaires. While these assess patient health at specific points in time under standardized conditions, allowing for easy comparison across groups, this ``snapshot'' approach provides limited information.  In digital health, biomarkers are derived from time series under free-living conditions, providing far more information - but making data analysis more difficult. Different authors \citep{matabuena2023distributional, ghosal2024multivariate} have suggested using distributional representations of time series as digital biomarkers, an approach that has improved the prediction of clinical outcomes in diabetes when followed using CGM.  Digital biomarkers are increasingly employed in areas such as cardiovascular disease \citep{chang2021deepheart}, gastroenterology \citep{dua2018monitoring}, mental health \citep{gentili2017longitudinal}, and, as mentioned, diabetes \citep{poolsup2013systematic}. New digital biomarkers will likely be defined for many other diseases, with patient summary representations taking different functional forms, e.g., distributional representations.

The present work proposes a simple method for estimating functional cut-offs for digital health biomarkers in Hilbert spaces, and demonstrates its potential for detecting diabetes mellitus when using CGM data collected over extended periods. This approach contrasts with the traditional diagnosis of diabetes, which relies on single-time-point measurements of fasting plasma glucose (FPG) and glycated haemoglobin (HbA1c). In a disease as heterogeneous as diabetes mellitus, in which different phenotypic characteristics complicate risk stratification and classification, this method could improve diagnosis. 

The remainder of the article is structured as follows. In the next subsection, the concept of CGM and its associated data analysis is introduced. Then, we present the main characteristics of optimal cut-off point estimation and how such cut-offs can be identified for functional data. Next section, introduces the proposed mathematical model; this is divided into five subsections. First, the distributional representation of CGM data (glucodensity) is discussed.  The problem of optimal cut-off estimation in separable Hilbert spaces is then formulated. The following subsection describes the practical calculation of the optimal cut-off for functional data. A bootstrap mechanism for estimating the functional cut-off is then introduced. Finally, this section concludes with details on software implementation. Then, in a following section, the proposed methodology is tested using simulated data. Next section addresses the problem of predicting the prevalence and incidence of diabetes based on functional representations from CGM. Lastly, the paper concludes with a broad discussion.

\subsection{Continuous glucose monitoring data analysis}
Diabetes mellitus is a chronic complex metabolic disorder affecting $830$ million people worldwide. Its prevalence is expected to rise over time by more than $10-$percent \citep{who_diabetes}. Diagnosing, risk stratifying, and managing the disease requires regular monitoring of blood glucose levels \citep{galicia2020pathophysiology}. Despite the strong evidence supporting the benefits of incorporating CGM into diabetes management, the use of these devices in healthy individuals remains a matter of debate. Some sources highlight this area as an ``evidence-free zone'' \citep{guess2023growing} suggesting caution in its use, while others outright advise against its use in healthy populations \citep{oganesova2024innovative}. However, CGM data could be valuable for identifying healthy individuals at higher risk of developing diabetes \citep{marco2022time,matabuena2024glucodensity}, offering the possibility of earlier intervention. 

The clinical interpretation of CGM data is challenging given the high-dimension nature of time series and the lack of temporal synchronization across patients. For each patient, the data reveals fluctuating values with peaks following meals, and troughs after fasting or physical activity. The timing of these peaks and troughs is influenced by a subject's behavior, and is therefore unpredictable, resulting in random patterns over the function's domain (see left plot of Figure \ref{fig0}). As a result, comparing raw CGM data across individuals (or even within the same individual) is unfeasible. Physicians therefore currently rely on summary measurements, such as the mean and standard deviation of a whole CGM time series. Several glycemic variability indices may also be used, including the area under the CGM curve (AUC), the Continuous Overall Net Glycemia Action (CONGA) index, the Low Blood Glucose Index (LBGI), the High Blood Glucose Index (HBGI), the Glycemic Risk Assessment Diabetes Equation (GRADE), and the Mean Amplitude of Glycemic Excursions (MAGE) metric, among others \citep{nguyen2020review}. The proportion of time spent below range ($<70$ mg/dL), within range ($70-140$ mg/dL) and above range ($>140$ mg/dL) are also commonly used \citep{beck2019validation, battelino2019clinical}. However, these indices provide only a partial view of glucose homeostasis complexity, since individuals with similar values for these indices may still show differences in their raw CGM records given that these discard some distributional information and do not fully take into account glucose dynamics.

To overcome the limitations of CGM indices, Matabuena et al. (2021) \citep{matabuena2021glucodensities} introduced a distributional representation of CGM data taking into account a patient's 'glucodensity'. This methodology has gained significant attention \citep{cui2023investigating, gomez2024insulin, matabuena2024multilevel} since it incorporates more detailed information than traditional CGM biomarkers. In essence, a glucodensity is a density function that captures the proportion of time spent at each glucose level, providing a comprehensive summary of all the data captured by a CGM tracker. Each subject generates a glucodensity, which can be analyzed as functional data within a distributional framework. This biostatistical concept can be viewed as a generalization of "time in range", extending it to cover the entire spectrum of glucose levels. The clinical interpretability of glucodensities, combined with the statistical tools developed to handle such data \citep{matabuena2022kernel, matabuena2023distributional}, make it a promising framework for predicting various conditions based on digital biomarkers. Indeed, distributional glucodensity-based summary methods can be used with any biological time series. Figure \ref{fig0} presents raw CGM data alongside the corresponding glucodensities, and their distributional representation, for two subjects: Subject $68$, a healthy control, and Subject $496$, who has diabetes. While the raw data reveal minimal discernible differences between these individuals, the glucodensity data and their distributional representation effectively capture variations in both the mean glucose value and its variability.

\begin{figure}[!htb]
\centering
\includegraphics[width=\textwidth]{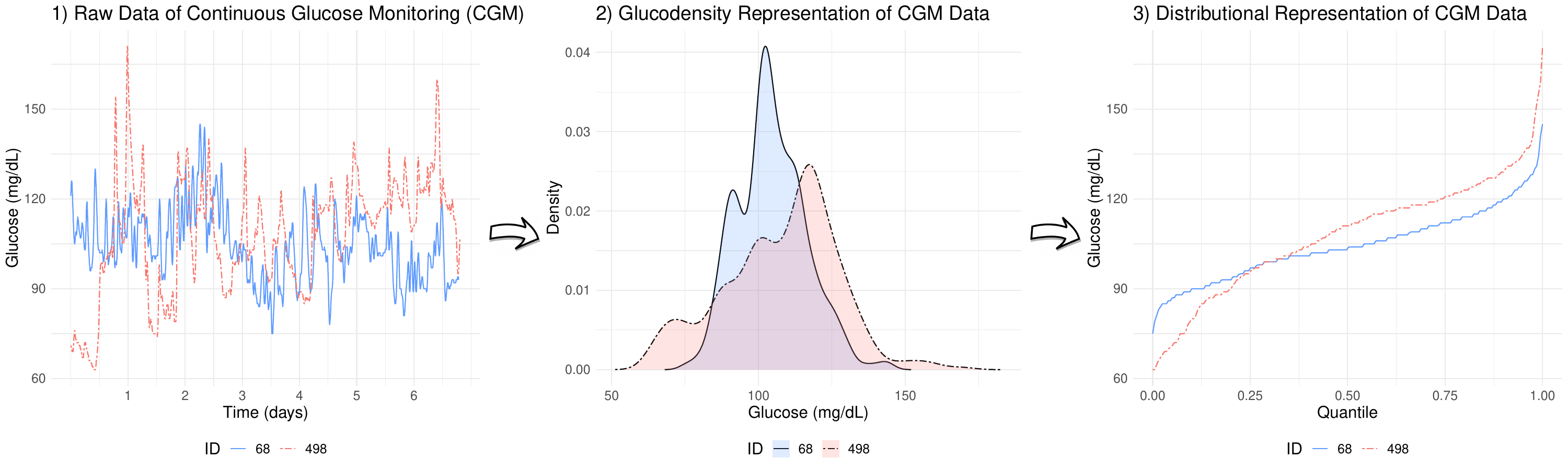} \\
\vspace{0.5cm}
\caption{Transformation from Continuous Glucose Monitoring (CGM) raw data into glucodensities and its distributional representation. This process is illustrated for two subjects; ID = $498$ who was diagnosed of diabetes during follow-up, and a healthy control ID = $68$.}
\label{fig0}
\end{figure}

\subsection{Optimal cut-points for scalar biomarkers and functional data analysis extensions}

The statistical field that focuses on defining optimal cut-offs aims to dichotomize continuous variables and thus create easy-to-implement clinical rules. One way of obtaining a cut-off is based on optimizing the ROC curve \citep{fawcett2006introduction}, such as maximizing the associated test's specificity or sensitivity. However, this method can be subjective since the choice of an optimal cut-off typically requires a balance that can be strongly influenced by the final application of that cut-off. Alternatively, the Youden Index \citep{fluss2005estimation, perkins2006inconsistency} maximizes the sum of both sensitivity and specificity to determine the optimal threshold. The diagnostic likelihood ratio \citep{boyko1994ruling} and cost-effectiveness analysis \citep{parsons2023integrating} further add to the statistical toolbox for optimal cut-off estimation. An updated review of the methods for defining optimal cut-offs can be found in \citep{lopez2014optimalcutpoints}.

In the field of digital health, the corresponding biomarkers are functions over time, referred to as functional data \citep{crainiceanu2024functional}.
Categorizing functional data biomarkers can, however, run into technical difficulties.  For example, techniques for multivariate analysis, such as logistic regression and discriminant analysis, are not directly applicable to functional data settings given the infinite-dimensional nature of the underlying random objects. A common solution is to propose specific functional data versions of such algorithms, leading to adaptations such as functional logistic regression \citep{araki2009functional} or functional discriminant analysis \citep{james2001functional, preda2007pls, shin2008extension}. Additional algorithms include the use of kernel non-parametric methods \citep{ferraty2003curves}, support vector machines \citep{rossi2006support}, principal components analysis \citep{hall2001functional, glendinning2003shape}, and functional depth classifiers \citep{cuevas2007robust}, among others. In theoretical work, it has been shown that perfect classification is feasible for Gaussian functional processes \citep{delaigle2012achieving, cuesta2023perfect} under certain conditions. However, in biomedical settings, classification challenges arise due to overlap between groups, making perfect classifications difficult. Such is the case with diabetes, a heterogeneous, multifactorial disease with different definitions according to the medical guidelines consulted.

The following lines introduce a simple strategy for defining the optimal cut-off for random objects in separable Hilbert spaces, exploiting previous work on optimal cut-offs for scalar biomarkers. 

\section{Mathematical models}
The objective of this section is to introduce a comprehensive framework for determining optimal cut-off points within statistical objects that take values in arbitrary separable Hilbert spaces, denoted as $\mathcal{H}$. Given our ultimate aim of applying such models to distributional CGM representations, we begin by defining the necessary mathematical representations. Subsequently, we present a generalized formulation of an optimal-cut algorithm designed for application in any separable Hilbert space. Then, to exemplify the algorithm's application in the context of the CGM scientific problem, we introduce an algorithm specific to this situation. Following this, we present an algorithm for quantifying the uncertainty in cut-point estimation using a bootstrap approach. Finally, we provide details on the practical implementation of the algorithm.

\subsection{Definition of distributional representation of glucose values: glucodensity}

We start with the formal definition of the distributional representation of glucose values. For a given patient \( i \), we denote the glucose monitoring data by pairs \( (t_{ij}, X_{ij}) \), where \( j = 1,\ldots,m_i \). Here, \( t_{ij} \) represents the recording times, which are typically equally spaced across the observation interval, and \( X_{ij} \in [a,b] \) is the glucose level at time \( t_{ij} \in [0, T_i] \). The values \( a \) and \( b \) represent the minimum and maximum range of the CGM monitor, which in our case are \( a = 40 \) mg/dL and \( b = 400 \) mg/dL. The number of records \( m_i \), the spacing between them, and the overall observation length \( T_i \) are allowed to vary across patients.

These data can be viewed as discrete observations of a continuous latent process \( Y_i(t) \), with \( X_{ij} = Y_i(t_{ij}) \). The glucodensity is defined as \( f_i(x) = F_i'(x) \), where \( F_i'(x) \) denotes the derivative of the cumulative distribution function \( F_i(x) \). This distribution function is given by

\begin{equation*}
F_i(x) = \frac{1}{T_i} \int_{0}^{T_i} \mathbb{I}(Y_i(t) \leq x) \, dt
\end{equation*}

\noindent for \( \underset{t \in [0, T_i]}{\inf} Y_i(t) \leq x \leq \underset{t \in [0, T_i]}{\sup} Y_i(t) \).

This expression represents the proportion of the observational interval during which the glucose level remains below the value \( x \). The functions \( F_i \) are a set of increasing functions from 0 to 1. For \( x \in [a, b] \), \( f_i(x) \) roughly measures the proportion of time that an individual patient spends at a continuous glucose concentration of \( x \) over the domain \( [a, b] \).

Density function modeling is a natural extension of traditional CGM data analysis methods called time-in-range metrics. These metrics measure the proportion of time individuals spend within certain glucose ranges. For example, hypoglycemia can be measured as \( \widehat{\text{Hypo}}_i = \frac{1}{n_i} \sum_{j=1}^{n_i} \mathbf{I}\{X_{ij} \leq 54\} \), but from a functional perspective, density functions provide a more detailed representation.

From a technical perspective, using probability distribution or density functions of individual CGM time series introduces challenges in statistical analysis because such representations are defined in a metric space without the linear structure of a vector space. In practice, to overcome these technical difficulties, it is common to consider the quantile function of the observed time series, \( \widehat{Q}_{i}(\rho) \), defined as

\[
\widehat{Q}_{i}(\rho) = \inf \left\{ t \in \mathbb{R} : \widehat{F}_{i}(t) = \frac{1}{n_i} \sum_{j=1}^{n_i} \mathbf{I}\{X_{ij} \leq t\} \geq \rho \right\}.
\]

Mathematically, in some modeling tasks, it is equivalent to consider the space of univariate probability distribution or density functions equipped with the 2-Wasserstein distance (see, for example, \citep{matabuena2021glucodensities, park2025fixedthresholdsoptimizingsummaries}).
\subsection{Defining optimal cut-points for random objects in separable Hilbert spaces}

In this section, we present a methodology to determine optimal cut-points for categorizing a biomarker \( Y \) that is defined within a separable Hilbert space \( \mathcal{H} \). For each subject \( i \), where \( i = 1, \dots, n \), let \( Y_i \in \mathcal{H} \) denote their biomarker measurement, and let \( Z_i \in \{0,1\} \) be a binary indicator of disease status—specifically, \( Z_i = 0 \) indicates the control (non-disease) group, and \( Z_i = 1 \) indicates the case (disease) group.

To classify subjects based on their biomarker profiles, we introduce a continuous parameter \( c \) within a specified range \( [l, u] \). This parameter helps partition the Hilbert space \( \mathcal{H} \) into two distinct regions:

\begin{itemize}
    \item \( \mathcal{A}^{c+} \): The region containing subjects predicted to have the disease.
    \item \( \mathcal{A}^{c-} \): The region containing subjects predicted not to have the disease.
\end{itemize}

We define an indicator function \( I_i^{c} \) for each subject \( i \) to denote their classification:

\[
I_i^{c} = \begin{cases}
1, & \text{if } Y_i \in \mathcal{A}^{c+}, \\
0, & \text{if } Y_i \in \mathcal{A}^{c-}.
\end{cases}
\]

To establish the partitioning of \( \mathcal{H} \) into \( \mathcal{A}^{c+} \) and \( \mathcal{A}^{c-} \), we define a threshold function \( h_{c}^\sigma(\rho) \) over the domain \( \mathcal{S} \) (the domain of \( Y \)):

\begin{equation}\label{eqn:def}
    h_{c}^\sigma(\rho) = \mu(\rho) + c\,\sigma(\rho), \quad \forall \rho \in \mathcal{S},
\end{equation}

\noindent where:
\begin{itemize}
    \item \( \mu(\rho)\) is the pointwise  centrality measure as  mean, median or mode of the biomarker at the point \( \rho \in  \mathcal{S}  \) and estimated with the data for a specific group or in overall.
    \item \( \sigma(\rho) \) is a scaling function that adjusts the threshold at each point \( \rho  \in \mathcal{S} \). 
\end{itemize}

Using \( h_{c}^\sigma(\rho) \), we define the regions:

\begin{align}
    \mathcal{A}^{c+} &= \left\{ f \in \mathcal{H} \,\big|\, f(\rho) \geq h_{c}^\sigma(\rho), \quad \forall \rho \in \mathcal{S} \right\}, \\
    \mathcal{A}^{c-} &= \left\{ f \in \mathcal{H}, f \notin  \mathcal{A}^{c+}  \right\}
\end{align}

Therefore, the indicator function \( I_i^{c} \) can be explicitly expressed as:

\[
I_i^{c} = \mathbb{I}\left( Y_i(\rho) \geq h_{c}^\sigma(\rho), \quad \forall \rho \in \mathcal{S} \right),
\]

where \( \mathbb{I}(\cdot) \) is the indicator function, returning 1 if the condition is true and 0 otherwise.

Our objective is to find the optimal value of \( c \in [l, u] \) that maximizes diagnostic performance. Specifically, we aim to maximize one of the following metrics:

\begin{align}
    \text{Sensitivity} &: \quad \max_{c \in [l, u]} \mathbb{P}(I_i^{c} = 1 \mid Z_i = 1), \\
    \text{Specificity} &: \quad \max_{c \in [l, u]} \mathbb{P}(I_i^{c} = 0 \mid Z_i = 0), \\
    \text{Youden Index} &: \quad \max_{c \in [l, u]} \left( \mathbb{P}(I_i^{c} = 1 \mid Z_i = 1) + \mathbb{P}(I_i^{c} = 0 \mid Z_i = 0) - 1 \right).
\end{align}

\begin{remark}
In Equation~\ref{eqn:def}, we define the partition function $h_{\rho}^{\sigma}(\cdot)$ in terms of a univariate parameter $c$. We focus on quantile functions, which are monotone increasing, and use the scalar $c$ to discriminate between disease and non-disease groups maybe it is enought. This works because the quantile functions are ordered for all probabilities $\rho \in [0,1]$.

However, in many applications involving more general data structures in a Hilbert space $\mathcal{H}$, discrimination between disease and non-disease groups can be more complex. In such scenarios, one may replace the scalar parameter $c$ by a function 
\[
c(\rho) \;=\; \sum_{j=1}^{m} c_j \,\phi_{j}(\rho),
\]
where $(c_1,\dots,c_m)\in \mathbb{R}^{m}$, and $\{\phi_{j}(\cdot)\}_{j=1}^m$ are predefined basis functions. The goal is then to select the optimal coefficients $(c_1,\dots,c_m)$ to achieve the best model discrimination. This functional extension offers greater flexibility and allows for more adaptive partitioning of $\mathcal{H}$, albeit at the cost of increased complexity.

In this paper, we focus on the scalar parameter $c$ because there is a clear biological boundary separating the diabetic and non-diabetic groups in our data and in terms of quantile functions. Nevertheless, the approach may be further generalized by reparameterizing $c$ as a function of the domain $p \in \mathcal{S}$ for modeling other clinical problems and data structures.
\end{remark}

\subsection{Practical calculation of optimal cut-points for distributional representations of CGM time series data}

In this context, we consider \( \mathcal{H} \) as the space of increasing quantile functions defined over the interval \([40, 400]\) mg/dL, which corresponds to the measurement range of Continuous Glucose Monitoring (CGM) devices. For each patient \( i \), we represent their random response as \( Y_i(\rho) = \widehat{Q}_i(\rho) \), where \( \rho \in [0,1] \) denotes the quantile probabilities and \( \widehat{Q}_i(\rho) \) is the estimated quantile function of their glucose measurements.

Given an independent and identically distributed (i.i.d.) sample \( \mathcal{D}_{n} = \{(Y_{i}, Z_{i})\}_{i=1}^{n} \) drawn from a joint distribution \( P \), where \( Y_i \in \mathcal{H} \) and \( Z_{i} \in \{0,1\} \) indicates disease status, we aim to estimate the threshold function \( h_{c}^\sigma(\rho) \) for any \( c \) in a specified range \( [l, u] \).

We estimate \( h_{c}^\sigma(\rho) \) using the sample mean of the quantile functions:

\begin{equation}
    h_{c}^\sigma(\rho) = \frac{1}{n} \sum_{i=1}^{n} Y_i(\rho) + c,
\end{equation}

\noindent where the term \( \frac{1}{n} \sum_{i=1}^{n} Y_i(\rho) \) is the empirical estimate of the mean function of the quantile function of the both groups considered (disease and non-disease). For simplicity in the clinical application examined, we assume \( \sigma(\rho) = 1 \) for all \( \rho \in \mathcal{S} \).

Using this estimated threshold function, we classify each observation \( Y_i \) into regions \( \mathcal{A}^{c+} \) or \( \mathcal{A}^{c-} \) by defining the indicator function \( I^{c}_{i} \):

\[
I^{c}_{i} = \mathbb{I}\left( Y_i(\rho) \geq h_{c}^\sigma(\rho) \quad \text{for all } \rho \in [0,1] \right),
\]

\noindent where \( \mathbb{I}(\cdot) \) is the indicator function that returns if the patient's quantile function exceeds the threshold at all probability  levels of the quantile function and zero otherwise. 

The clasification rules established allow us to empirically estimate the sensitivity, specificity, and the Youden index for each value of the parameter \( c \):

\begin{align}
    \widehat{\text{Sensitivity}}_{c} &= \frac{\sum_{i=1}^{n} \mathbb{I}(I^{c}_{i} = 1 \land Z_{i} = 1)}{\sum_{i=1}^{n} \mathbb{I}(Z_{i} = 1)}, \\
    \widehat{\text{Specificity}}_{c} &= \frac{\sum_{i=1}^{n} \mathbb{I}(I^{c}_{i} = 0 \land Z_{i} = 0)}{\sum_{i=1}^{n} \mathbb{I}(Z_{i} = 0)}, \\
    \widehat{\text{Youden Index}}_{c} &= \widehat{\text{Sensitivity}}_{c} + \widehat{\text{Specificity}}_{c} - 1.
\end{align}

By evaluating these metrics over a range of threshold values \( \{c_{1}, c_{2}, \ldots, c_{m}\} \), we identify the optimal \( c \) that maximizes the desired diagnostic criterion based on the empirical estimates. This approach enables us to select the cut-point that offers the best trade-off between sensitivity and specificity for disease detection using CGM data.

\subsection{Bootstrapping for uncertainty quantification in optimal cut-off determination}

Utilising the dataset \( \mathcal{D}_{n} = \{(Y_{i}, Z_{i})\}_{i=1}^{n} \), we identify the optimal cut-point \( \widehat{c} \) by optimising sensitivity, specificity, and the Youden Index through a one-dimensional \( c \)-process involving a family of functions:

\begin{equation*}
    h^{\sigma}_{c}(\rho) = \mu(\rho) + c\,\sigma(\rho), \quad c \in [l,u].
\end{equation*}

Bootstrap resampling is conducted \( B \) times, generating samples \( \mathcal{D}^{b}_{n} = \{(Y_{i}, Z_{i})\}_{i \in \mathcal{I}_{b}} \) by drawing with replacement from the index set \( \mathcal{I} = \{1, \dots, n\} \). This approach yields a series of optimal cut-points \( \{\widehat{c}_{b}\}_{b=1}^{B} \). We employ empirical quantiles \( \left[ \alpha/2,\, 1 - \alpha/2 \right] \) to calculate confidence intervals, establishing confidence regions for these optimal cut-points. This method facilitates inferential analysis on the optimal cut-points within a functional space using bootstrapping.

\begin{algorithm}[H]
\SetAlgoLined
\DontPrintSemicolon
\KwIn{Original dataset \( \mathcal{D}_{n} = \{(Y_{i}, Z_{i})\}_{i=1}^{n} \), Number of bootstrap samples \( B \) (e.g., \( B=1000 \)), Significance level \( \alpha \) (e.g., \( \alpha = 0.05 \))}
\KwOut{Confidence interval \( [c_{\text{lower}}, c_{\text{upper}}] \) for the optimal cut-point \( \widehat{c} \)}

\Begin{
    \For{\( b = 1 \) \KwTo B}{
        \textbf{Bootstrap Resampling:}\;
        Generate a bootstrap sample \( \mathcal{D}_{n}^{b} = \{(Y_{i}^{*}, Z_{i}^{*})\}_{i=1}^{n} \) by sampling with replacement from \( \mathcal{D}_{n} \).\;
        
        \textbf{Estimate Mean Function:}\;
        Compute the empirical mean function for the bootstrap sample:
        \[
        \mu^{*}_{b}(\rho) = \frac{1}{n} \sum_{i=1}^{n} Y_{i}^{*}(\rho).
        \]
        
        \textbf{Threshold Function and Classification:}\;
        \For{each candidate cut-point \( c \) in a grid \( [l, u] \)}{
            Define the threshold function:
            \[
            h_{c}^{*}(\rho) = \mu^{*}_{b}(\rho) + c.
            \]
            
            Classify each observation in the bootstrap sample:
            \[
            I_{i}^{c, *} = \mathbb{I} \left( Y_{i}^{*}(\rho) \geq h_{c}^{*}(\rho) \quad \forall \rho \in [0,1] \right).
            \]
        }
        
        \textbf{Compute Performance Metrics:}\;
        \For{each candidate cut-point \( c \)}{
            Compute the bootstrap estimates of sensitivity and specificity:
            \begin{align*}
                \widehat{\text{Sensitivity}}_{c}^{*} &= \frac{\sum_{i=1}^{n} \mathbb{I} (I_{i}^{c, *} = 1 \land Z_{i}^{*} = 1)}{\sum_{i=1}^{n} \mathbb{I} (Z_{i}^{*} = 1)}, \\
                \widehat{\text{Specificity}}_{c}^{*} &= \frac{\sum_{i=1}^{n} \mathbb{I} (I_{i}^{c, *} = 0 \land Z_{i}^{*} = 0)}{\sum_{i=1}^{n} \mathbb{I} (Z_{i}^{*} = 0)}.
            \end{align*}
            
            Compute the bootstrap estimate of the Youden Index:
            \[
            \widehat{\text{Youden Index}}_{c}^{*} = \widehat{\text{Sensitivity}}_{c}^{*} + \widehat{\text{Specificity}}_{c}^{*} - 1.
            \]
        }
        
        \textbf{Select Optimal Cut-Point for Bootstrap Sample:}\;
        Determine the optimal cut-point \( \widehat{c}_{b} \) for the bootstrap sample by maximizing the chosen criterion (e.g., Youden Index):
        \[
        \widehat{c}_{b} = \arg\max_{c} \widehat{\text{Youden Index}}_{c}^{*}.
        \]
    }
    
    \textbf{Construct Confidence Interval:}\;
    After all bootstrap samples are processed, obtain the empirical distribution of the optimal cut-points \( \{\widehat{c}_{1}, \widehat{c}_{2}, \dots, \widehat{c}_{B}\} \).
    
    Compute the \( \alpha/2 \) and \( 1 - \alpha/2 \) empirical quantiles of the bootstrap optimal cut-points to obtain the confidence interval:
    \[
    c_{\text{lower}} = \text{Quantile}_{\alpha/2}\left( \{\widehat{c}_{b}\}_{b=1}^{B} \right), \quad
    c_{\text{upper}} = \text{Quantile}_{1 - \alpha/2}\left( \{\widehat{c}_{b}\}_{b=1}^{B} \right).
    \]
    
    \textbf{Result:}\;
    The confidence interval for the optimal cut-point \( \widehat{c} \) is given by \( [c_{\text{lower}}, c_{\text{upper}}] \).
}

\caption{Bootstrap algorithm for confidence bands of optimal cut-points for our distributional clinical case study}
\end{algorithm}

\begin{theorem}
    The asymptotic distribution of the estimated optimal cut-point \( \widehat{c} \), with respect to sensitivity, specificity, and the Youden Index criteria, follows a Gaussian distribution. Consequently, the application of the bootstrap resampling technique for inference is justified.
\end{theorem}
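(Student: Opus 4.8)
The plan is to reduce the infinite-dimensional cut-point problem to a one-dimensional one and then invoke the asymptotic theory for argmax (M-)estimators. The crucial observation is that the classification rule depends on the functional datum $Y_i$ only through the scalar statistic
\[
\widehat{W}_i = \inf_{\rho \in [0,1]}\big(Y_i(\rho) - \widehat{\mu}(\rho)\big), \qquad \widehat{\mu}(\rho) = \tfrac{1}{n}\sum_{i=1}^n Y_i(\rho),
\]
because $I_i^c = \mathbb{I}(Y_i(\rho) \ge \widehat{\mu}(\rho) + c \ \forall \rho) = \mathbb{I}(\widehat{W}_i \ge c)$. Thus $\widehat{c}$ is \emph{exactly} the Youden- (or sensitivity-, or specificity-) optimal threshold for the scalar pseudo-biomarker $\widehat{W}_i$, and the problem collapses to the classical scalar cut-point problem whose asymptotic normality is documented in the cited literature. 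First I would write the population objective, e.g.\ for the Youden criterion $J(c) = G_0(c) - G_1(c)$, where $G_z$ is the conditional CDF of the limiting statistic $W = \inf_\rho(Y(\rho) - \mu(\rho))$ given $Z = z$, and impose the standard regularity conditions: $G_0, G_1$ possess continuous densities $g_0, g_1$ near the target, the maximizer $c_0$ is unique and interior, and $J''(c_0) = g_0'(c_0) - g_1'(c_0) < 0$.

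Next I would control the two sources of sampling error separately. Plugging $\widehat{\mu}$ in for $\mu$ perturbs each $\widehat{W}_i$; assuming the quantile-function process satisfies a uniform functional central limit theorem, so that $\sqrt{n}(\widehat{\mu} - \mu)$ converges weakly in $\ell^\infty([0,1])$ and in particular $\sup_\rho|\widehat{\mu}(\rho) - \mu(\rho)| = O_P(n^{-1/2})$, and—since the infimum functional is Hadamard directionally differentiable with derivative evaluated at the (assumed unique) minimizing $\rho_i^\ast$ by Danskin's theorem—one gets $\widehat{W}_i - W_i = -(\widehat{\mu}(\rho_i^\ast) - \mu(\rho_i^\ast)) + o_P(n^{-1/2})$ uniformly. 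Combined with the empirical conditional CDFs $\widehat{G}_z$, a functional-delta-method expansion yields a uniform linearization of the empirical objective $J_n(c)$ around $J(c)$ as a sum of i.i.d.\ influence terms plus a drift coming from the estimation of $\mu$; uniform convergence over the compact grid $[l,u]$ follows from the Donsker property of the indicator class $\{\mathbb{I}(\cdot \ge c): c \in [l,u]\}$.

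With the linearization in hand I would read off the limit law from the first-order condition. Writing $J_n'(\widehat{c}) = 0$ (in the smoothed formulation) and Taylor-expanding about $c_0$ gives $\sqrt{n}(\widehat{c} - c_0) = -\,J''(c_0)^{-1}\,\sqrt{n}\, J_n'(c_0) + o_P(1)$, and $\sqrt{n}\, J_n'(c_0)$ is asymptotically centered Gaussian by the CLT applied to the influence representation; hence $\sqrt{n}(\widehat{c} - c_0) \Rightarrow N(0, \tau^2)$ with $\tau^2 = J''(c_0)^{-2} \operatorname{Var}(\text{influence})$. For the bootstrap claim I would show that the map sending the joint empirical distribution of $(W, Z)$ to the argmax $\widehat{c}$ is Hadamard differentiable at $P$ (the reduction above makes this a composition of the empirical-process map, the inf-functional, and the smooth-argmax map), so the exchangeable-bootstrap version of the functional delta method (e.g.\ van der Vaart and Wellner, Thm.~23.9) gives that the bootstrap law of $\sqrt{n}(\widehat{c}_b - \widehat{c})$ consistently estimates that of $\sqrt{n}(\widehat{c} - c_0)$, validating the quantile-based confidence interval produced by the Algorithm.

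The main obstacle I anticipate is the non-smoothness of the \emph{raw} empirical criterion: because $\widehat{\text{Sensitivity}}_c$ and $\widehat{\text{Specificity}}_c$ are piecewise-constant step functions of $c$, the unsmoothed argmax is governed by Kim--Pollard cube-root asymptotics and converges to a non-Gaussian (Chernoff-type) limit at rate $n^{1/3}$, not $n^{1/2}$. Gaussianity—and hence the stated bootstrap validity—therefore genuinely requires either a smoothed estimator of sensitivity and specificity (kernel-smoothed conditional CDFs with bandwidth $h_n \to 0$ at a suitable rate, as in the Fluss et al.\ framework the paper cites) or a comparably regularized objective; I would state the theorem under that smoothed regime. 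A secondary point to handle with care is that the naive $n$-out-of-$n$ bootstrap is known to be inconsistent under cube-root asymptotics, so the smoothing assumption is precisely what reconciles the Algorithm's bootstrap with the claimed Gaussian inference.
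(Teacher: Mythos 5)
Your reduction to the scalar pseudo-biomarker $\widehat{W}_i = \inf_{\rho}\bigl(Y_i(\rho) - \widehat{\mu}(\rho)\bigr)$ is a sharper version of the observation the paper itself relies on: the paper merely notes that the nested one-parameter family $\{\mathcal{A}^{c+}\}$ is a VC class, concludes that the empirical sensitivity/specificity processes indexed by $c$ are Donsker, and then asserts asymptotic normality of $\widehat{c}$ and bootstrap consistency in a single sentence, citing Kosorok. Your explicit collapse to a one-dimensional thresholding problem makes that VC/Donsker property transparent (the class $\{\mathbb{I}(\cdot \ge c)\}$ is the canonical example) and lets you import the scalar Youden-index machinery, the population first-order condition, and an influence-function representation. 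For the estimated centering function the paper proposes sample splitting, whereas you propose a Danskin/functional-delta linearization of $\widehat{W}_i - W_i$; splitting is cruder but avoids the uniform expansion you would need to justify.

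The more important difference is the obstacle you raise at the end, which the paper's proof does not address. Donskerness of $\{\mathcal{A}^{c+}\}$ yields weak convergence of $\sqrt{n}\bigl(J_n - J\bigr)$ as a process in $c$; it does not yield $\sqrt{n}$-consistency, let alone asymptotic normality, of the location $\arg\max_c J_n(c)$. Because $J_n$ is a difference of empirical distribution functions of the $\widehat{W}_i$, its maximizer is a sharp-threshold $M$-estimator in the Kim--Pollard regime: the variance of the differenced criterion scales linearly rather than quadratically in $|c-c_0|$, so the rate is $n^{1/3}$, the limit is a Chernoff-type (non-Gaussian) argmax of a two-sided Brownian motion with quadratic drift, and the $n$-out-of-$n$ nonparametric bootstrap is known to be inconsistent in that regime. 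The value of the criterion at the optimum is asymptotically normal, but the theorem is about the location $\widehat{c}$. Your diagnosis --- that Gaussianity and bootstrap validity require a smoothed or otherwise regularized estimate of the conditional distribution functions --- is therefore not a side remark but the missing hypothesis under which the statement becomes true; the paper's one-paragraph proof silently steps over exactly this point.
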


\begin{proof}
Assuming the functions \( \mu \) and \( \sigma \) are fixed for simplicity, the regions \( \mathcal{A}^{c+} \) have a finite Vapnik–Chervonenkis (VC) dimension. This property ensures that the empirical process associated with the \( M \)-estimator, which maximises sensitivity, specificity, and the Youden Index, satisfies the Donsker property (see Chapter Bootstrapping Empirical Processes \cite{kosorok2008introduction}). Therefore, employing a bootstrap algorithm with Efron multipliers to construct confidence bands around the estimator \( \widehat{c} \) achieves universal consistency due to the guaranteed asymptotic normality of the estimator. 

In cases where the functions \( \mu \) and \( \sigma \) are estimated from the sample data, a data splitting strategy can be utilised to simplify the analysis. By dividing the data into separate subsets for estimating \( \mu \) and \( \sigma \) and for computing the optimal cut-point, we preserve the bootstrap's reliability and consistency.
\end{proof}

\subsection{Computational details}

The implementation of this methodology involves the functionalities of two R packages for optimal cutpoint estimation within functional data. First, the \texttt{fda} \cite{fda} package is employed to calculate the functional mean, representing the average curve of data. Subsequently, we estimate the maximum value of $c$ at which a given curve is above the functional cutpoint. Then, the \texttt{OptimalCutpoints} package \citep{lopez2014optimalcutpoints} facilitates the core analysis. Specifically, this package calculates a chosen performance metric (e.g., maximization of Youden's index) over the distribution of the obtained $c$ values. This cutpoint corresponds to the value of $c$ which offer the best performance metric, signifying the ideal separation between classes within the functional data. Finally, in order to ensure that the estimated functional cutpoint is monotonically increasing we use the R package \texttt{scam} \citep{pya2024scam} to smooth the obtained curve. Specifically, the estimated curve is smoothed using monotonically increasing P-splines \citep{pya2015shape}.

\section{Simulation study}\label{sec3}

In this section, we assess the efficacy of our proposal using simulated functional distributional data. The data generating process is based on the following formulation. 

\begin{figure}[!htb]
	\centering
	\includegraphics[width=0.9\textwidth]{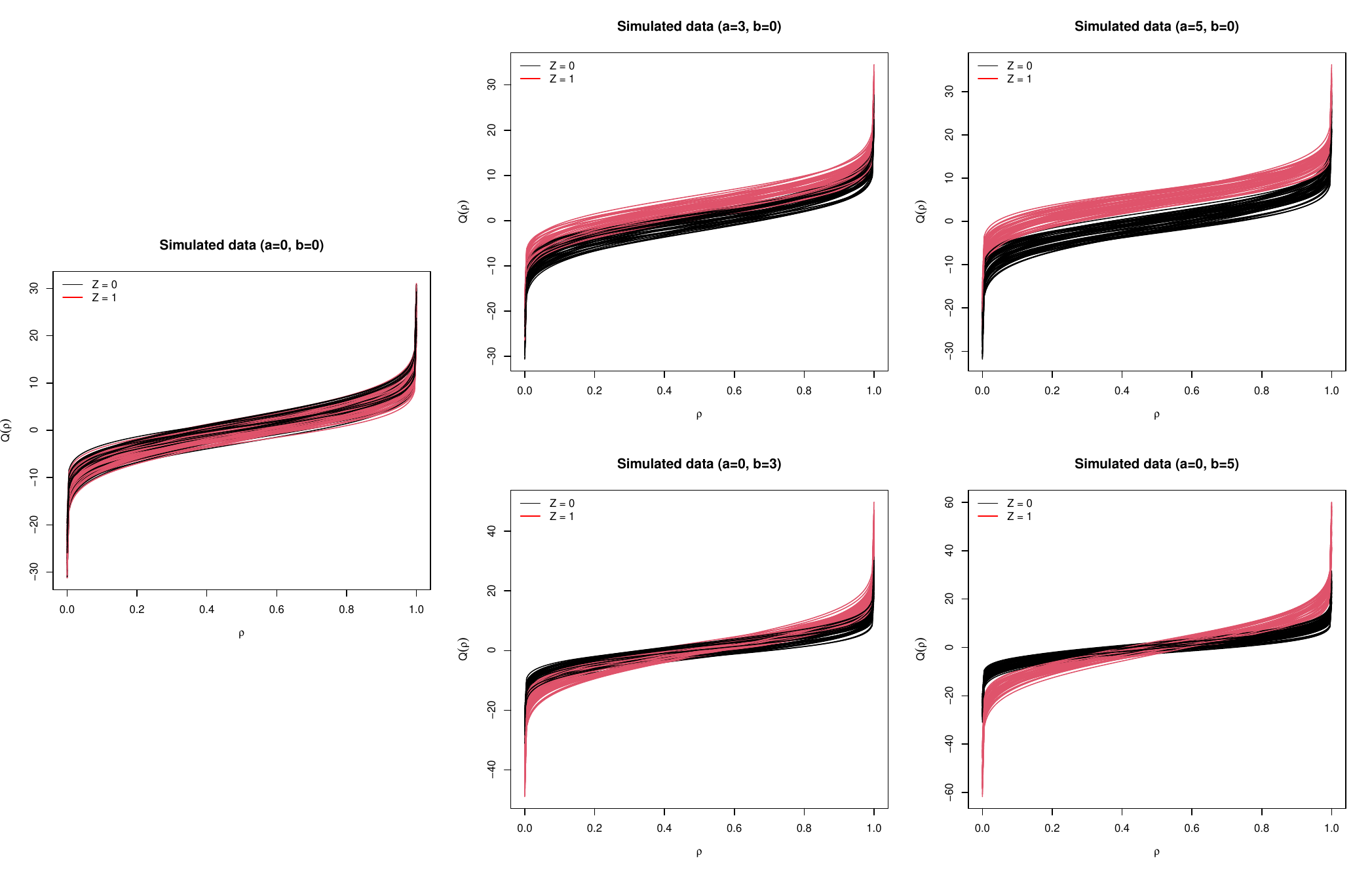} 
	\caption{Example of the simulated distributional data for different values of parameters $a$ and $b$, with $n=100$. }
	\label{fig1}
\end{figure}

Given a binary variable $Z \in Ber(0.5)$, and the quantile function $Q_0(\rho)$ from a truncate normal distribution $\text{TN}(\mu = 1, \sigma^2 = 1, a = -5, b = 5)$ the quantile function of  the functional distributional records are given by 

\begin{equation}
    Q(\rho) = aZ+U_1 + U_2 v + (5+b)Z U_3 Q_0(\rho) 
    \label{sim}
\end{equation}

\noindent where $U_1\sim U(-1, 1)$, $U_2\sim U(-1, 1)$ and $U_3\sim U(0.8, 1.2)$. The parameter $v$ was fixed to $v=2$ and it controls the degree of variability along $\rho$ of the simulated curves. The parameter $a$ governs the degree of difference between both groups regarding the location term of distributional functional data. Finally, $b$ is a parameter modulating the degree of difference between both groups due to the variability of the distributional functional data. A sketch of the simulated data is shown in Figure \ref{fig1} for different values of these parameters.

In this simulation scenario, we generated data according to equation \eqref{sim} using various values for parameters $a$ and $b$. Then, we assessed the discrimination capability of the proposed method between both groups. The disparities between the $Z=0$ and $Z=1$ groups are due to variations in either the location parameter (represented by varying $a$) or the scale parameter (represented by varying $b$). 

\begin{figure}[!htb]
	\centering
	\includegraphics[width=\textwidth]{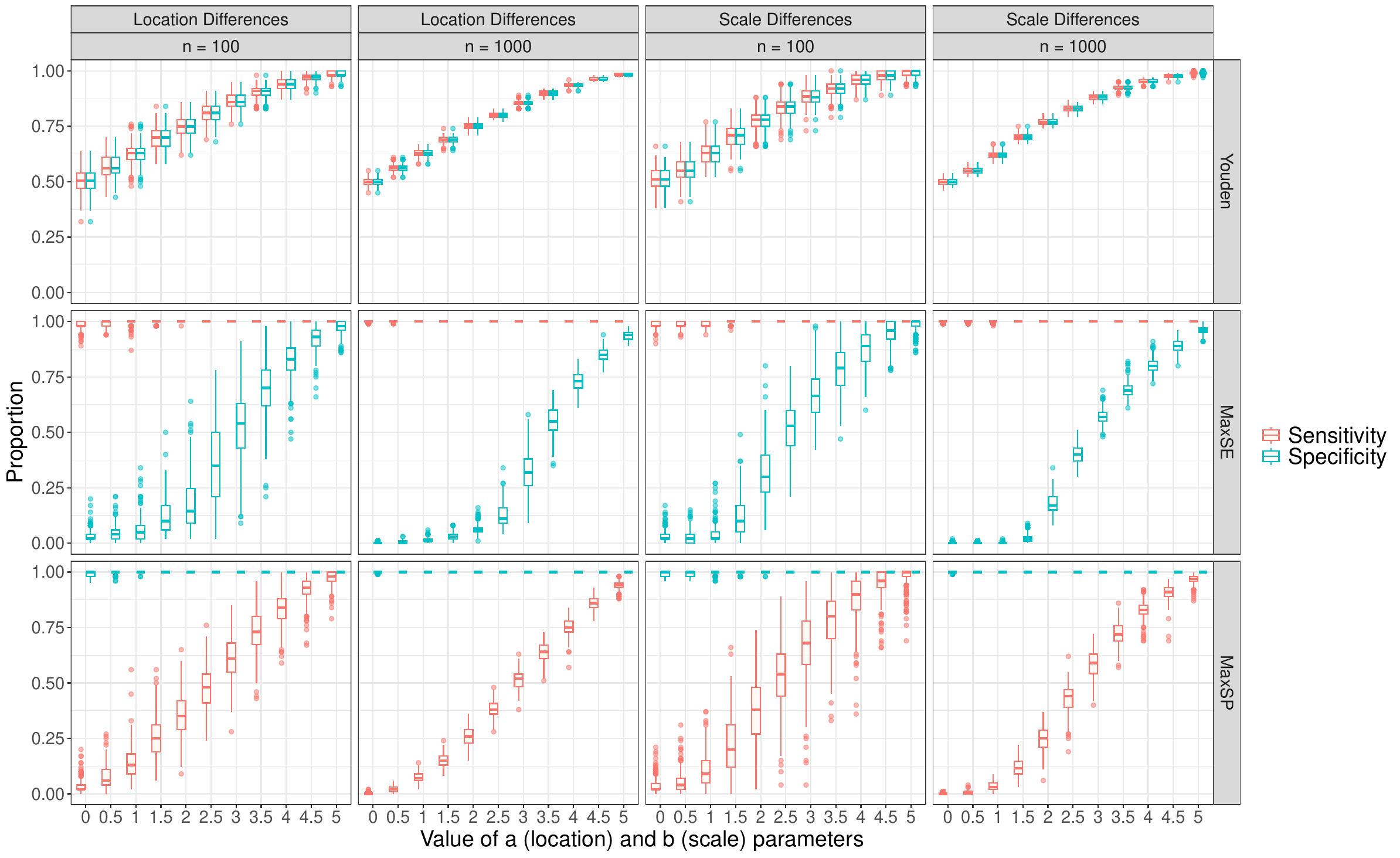} 
	\caption{Simulation results: estimated sensitivities and specificities for optimal cutoff curves based on Youden's, maximum specificity, and maximum sensitivity criteria across varying sample sizes and differences in location and scale parameters between groups.}
	\label{fig2}
\end{figure}

We estimated the functional cutpoint under three criteria: 1) Youden index criteria, 2) maximum sensitivity estimator , and 3) maximum specificity estimator. Sample sizes were set to $n=100$ and $n=1000$. The assessment was carried out across $R=1000$ replicates, with empirical sensitivity and specificity calculated for each classifier in every replicate.

In Figure \ref{fig2}, we present the sensitivity and specificity distribution across different values of $a$ and $b$. The Youden criteria display better performance improvements with increasing values of $a$ or $b$. Notably, the model exhibits the ability to accurately classify subjects regardless of whether the difference stems from location or scale parameters. As sample size increases the variability of the obtained sensitivities and specificities decreases. 

On  the other hand, the maximum sensitivity criteria consistently yield the highest sensitivity values regardless of the parameter values, while specificity gradually increases with greater disparities between groups. For values of the parameters $a$ and $b$ higher than $3.5$ we observe fair values for specificity. As in the Youden case, a higher sample size reduces the variability of the estimated specificities. Conversely, the maximum specificity criteria exhibit an inverse trend, showing a fair sensitivity for $a$ or $b$ values higher than $3.5$. Finally, the sensitivity variability decreases as sample size becomes higher. Both maximum sensitivity and specificity criteria works well with independence if the groups show differences on location or scale parameters. 

\section{Clinical validation of the optimal functional cutpoint}

In this section, optimal cut-off curves for identifying the prevalence and incidence of type $2$ diabetes are estimated using the distributional representation of CGM data. Establishing new diagnostic rules for type $2$ diabetes based on CGM data involves two key phases: first, identifying and classifying individuals using CGM data to detect current glycemic control abnormalities; and second, using the long-term follow-up data to predict future diabetes risk. Here, the prevalence of diabetes reflects the baseline disease status of subjects as diagnosed by primary care physicians. These diagnoses are based on the results of FPG and HbA1c tests. The incidence of diabetes here refers to the development of diabetes over a median follow-up period of $7.5$ years. This diagnosis is based on patients' electronic health records, which document repeated determinations of FPG and HbA1c over time, as well as documented diagnoses made by primary care physicians. Such longitudinal insights are unique to our cohort of subjects (see section below); other CGM studies often lack comparable long-term follow-up data. 

Based on this data we aim to establish new diagnostic rules for type $2$ diabetes based on CGM. This process involves two key phases: first, identifying and classifying individuals based on CGM data to detect current glycemic control abnormalities; and second, using the long-term follow-up data to predict future diabetes risk. The goal is to create more personalized and effective diagnostic criteria than traditional glycemic tests (i.e, fasting plasma glucose and glycated hemoglobin) or CGM indexes (e.g., mean glucose, standard deviation, mean amplitude glycemic excursions).

\subsection{AEGIS study}

The A-Estrada Glycation and Inflammation Study (AEGIS; trial NCT01796184) involved a unique population-based cohort, for which CGM data was collected along with long-term health outcomes, including the incidence of diabetes. Unlike most earlier studies, which primarily focused on patients with diabetes or which relied on short-term follow-ups, the AEGIS provides an extended longitudinal assessment of CGM data in both diabetic and non-diabetic individuals. This opens new avenues for assessing the association between CGM data and diabetes incidence risk in the general population. By capturing early glycemic patterns and their association with future diabetes risk, this study has the potential to redefine how CGM is used in disease prevention, beyond its traditional role in diabetes management.

Baseline data collection took place from $2012$ to $2015$, with participants undergoing structured assessments at their primary care centres, including answering lifestyle questionnaires, blood tests, and, for a subsample, a week-long CGM and dietary monitoring program (full details can be found in \citep{gude2017glycemic}). Health outcomes, including the incidence of type $2$ diabetes, were reviewed by August $2023$, providing a median follow-up of $7.5$ years (significantly longer than for most published CGM-based studies).

The AEGIS enrolled $1516$ subjects, of whom $622$ consented to participate in CGM. Among these, five participants did not comply with the study protocol and $37$ experienced biosensor disconnections during the monitoring period. Consequently, the final dataset comprised $580$ participants ($360$ women and $220$ men) who completed at least two days of CGM, including $65$ individuals diagnosed with diabetes at baseline (prevalence  $11.2\% $ $($CI95$\%$: $3.8\% - 14.1\%)$). 

At the start of each monitoring period, a research nurse inserted an Enlite$^{TM}$ CGM sensor (Medtronic, Inc., Northridge, CA, USA) subcutaneously into the subject's abdomen. The sensor continuously measured the interstitial glucose level ($40-400$ mg/dL), recording values every $5$ min. Participants were also provided with a conventional OneTouchR VerioR Pro glucometer (LifeScan, Milpitas, CA, USA) as well as compatible lancets and test strips for calibrating the CGM. All subjects were asked to make at least three capillary blood glucose measurements (usually before main meals) without checking the current CGM reading. The sensor was removed on the seventh day and the data downloaded and stored for later analysis. If the total data non-acquisition period (data   \textit{skips} ) per day exceeded $2$ h, all data for that day were discarded. 

Of the $515$ subjects with no diabetes at baseline, $26$ developed the disease by August $2023$ (incidence = $5.0\%$ (CI$95\%$: $3.4\% - 7.4\%$)). A participant was classified as having diabetes after meeting any of the following criteria:

\begin{enumerate}
\item Diagnosis of diabetes by a primary care doctor.
\item Prescription of antidiabetic drugs by a primary care doctor.
\item An HbA1c level of $>6.5\%$ and an FPG concentration of $126$ mg/dL or higher in a single blood test.
\item FPG or HbA1c values above the diagnostic cut-off points in two separate blood tests.
\end{enumerate}

\subsection{Optimal functional cutpoint estimation}\label{sec}

Figure \ref{figxx} shows the distributional representation of the CGM data for the AEGIS subjects (healthy subjects in blue, prevalent and incident cases of diabetes in red). The plot clearly shows distinct groupings, with subjects with diabetes having higher glucose levels than the rest. The separation of these two groups suggests that a functional cut-off could be established to effectively identify individuals with diabetes or who will develop it.

\begin{figure}[!htb]
\centering
\includegraphics[width=0.80\textwidth]{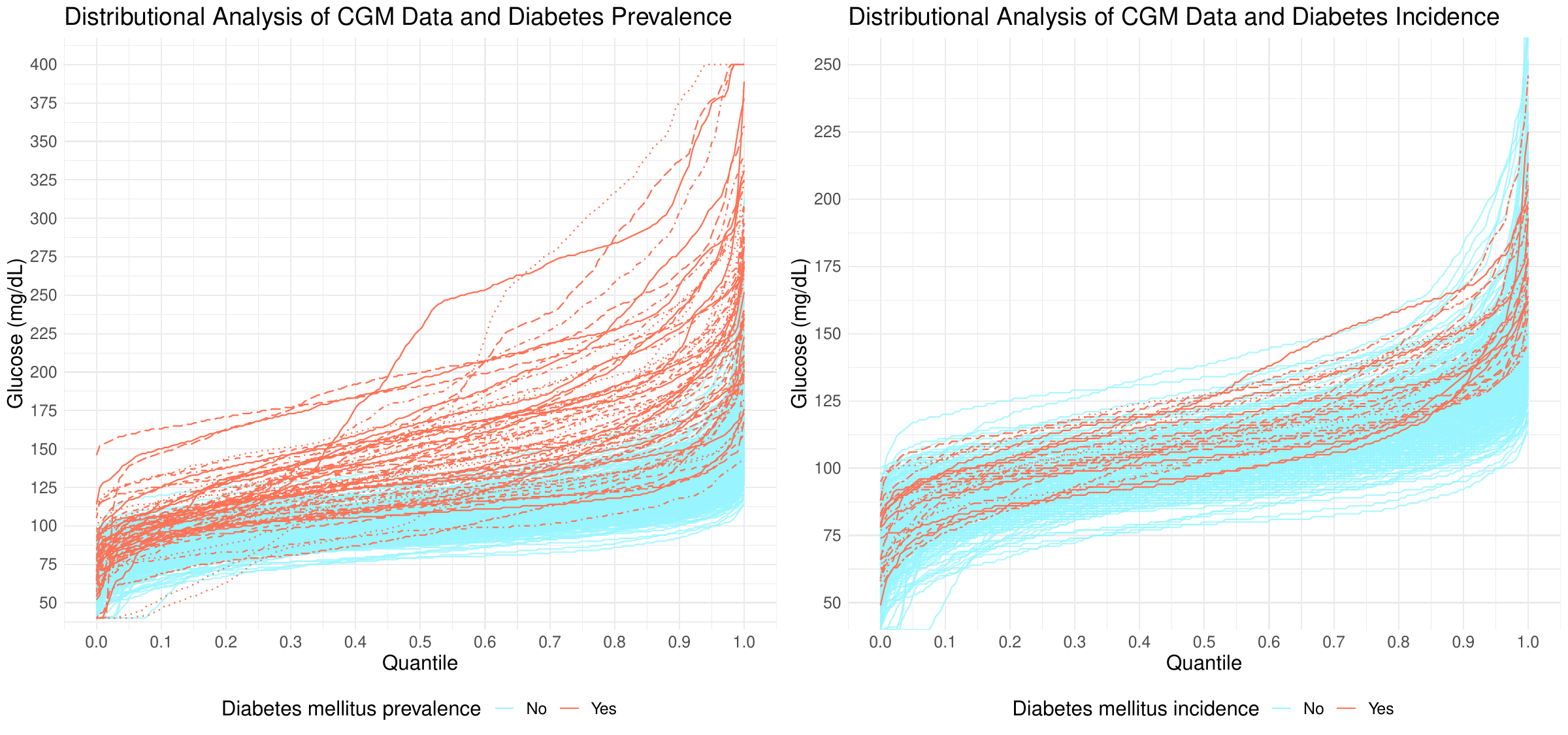}
\caption{Distributional representation of CGM data for our sample of subjects differentiating by baseline or follow-up diabetes status.}
\label{figxx}
\end{figure}

\begin{figure}[!htb]
\centering
	\includegraphics[width=0.85\textwidth]{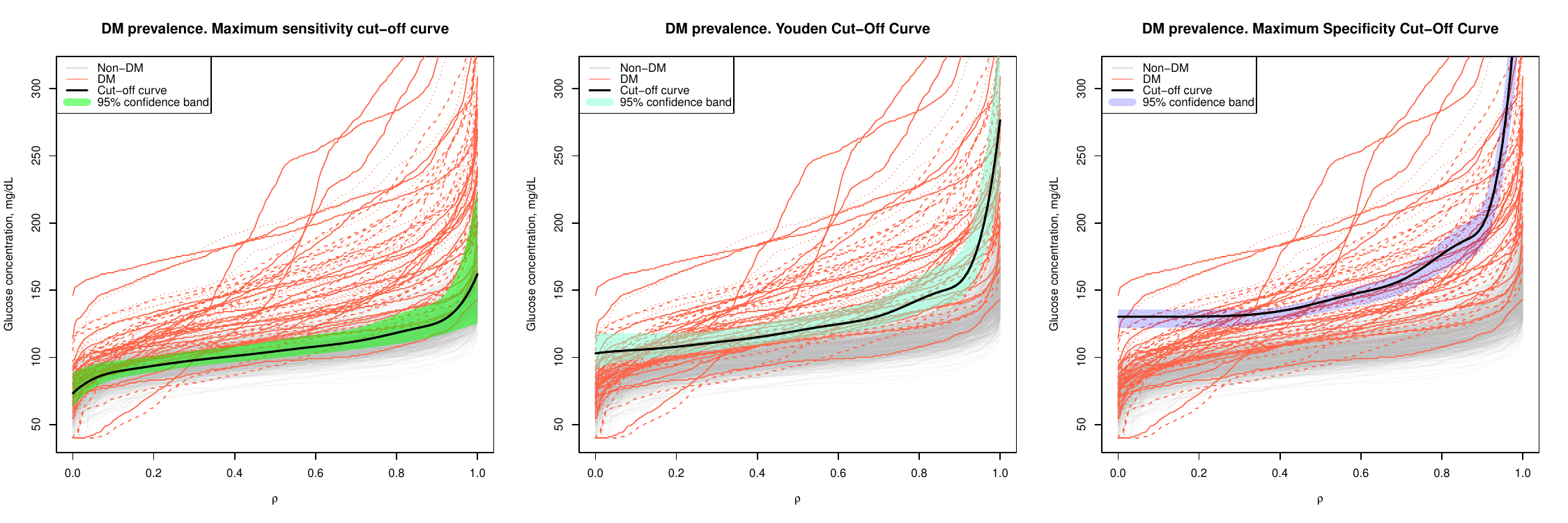}
		\includegraphics[width=0.85\textwidth]{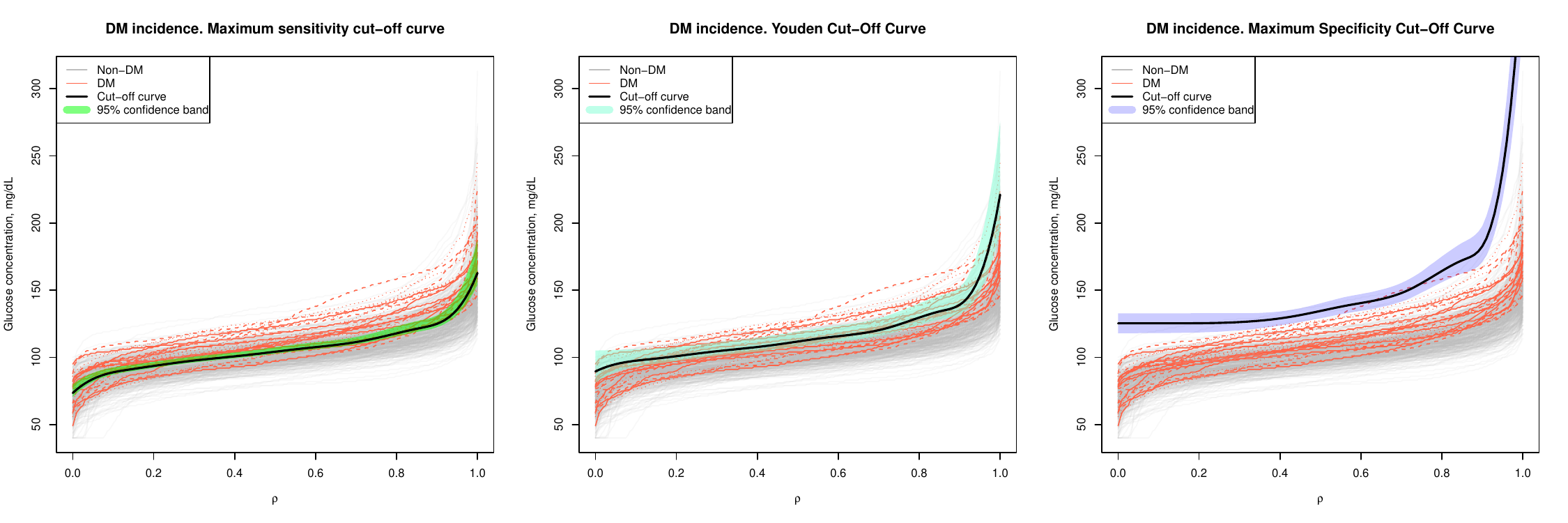}
	\caption{Optimal cut-off curves for predicting diabetes prevalence and incidence alongside with 95\% pointwise confidence band. Confidence bands were estimated based on the bootstrap scheme presented in algorithm 1.}
	\label{fig4}
\end{figure}

Figure \ref{fig4} shows the optimal cut-off curves for maximum sensitivity, the Youden Index, and maximum specificity, for predicting diabetes prevalence and incidence ($95\%$ pointwise confidence bands are shown alongside). For predicting diabetes prevalence, maximum sensitivity offers poor specificity  $0.10$ (CI$95\%: 0.06 - 0.80$). The median glucose value for this cut-off is $106.3$ mg/dL, with $\geq5.5\%$ of the time spent above the $140$ mg/dL threshold. The Youden Index shows a good balance between sensitivity and specificity, with a sensitivity of $0.89$ (CI$95\%: 0.78 - 0.98$) and a specificity of $0.92$ (CI$95\%: 0.79 - 0.99$). The median value of this cut-off curve rises to $119.8$ mg/dL, and the time above $140$ mg/dL to $\geq21.6\%$. Finally, the maximum specificity cut-off curve shows a low sensitivity of just $0.56$ (CI$95\%:0.42- 0.73$). The median value rises to $139.5$ mg/dL, and the curve is above $140$ mg/dL for $48.7\%$ of the monitoring time.  

When predicting diabetes incidence, maximum sensitivity shows poor specificity ($0.28$, 95\% CI: $0.16$ - $0.58$). The median glucose value for this cut-off curve is $106.5$ mg/dL, with $\geq4.9\%$ of the time spent above the $140$ mg/dL threshold. In contrast, the Youden Index provides a balanced trade-off between sensitivity and specificity, yielding a sensitivity of $0.84$ ($95$\% CI: $0.56$ - $0.96$) and a specificity of $0.69$ ($95$\% CI: $0.56$ - $0.93$). The median value of this cut-off curve increases until reaching $120.2$ mg/dL, with time spent above 140 mg/dL also increasing to $\geq20.5$\%. Finally, the optimal cut-off curve for maximum specificity shows very low sensitivity ($0.03$, $95$\% CI: $0.00$ - $0.13$). Its median value rises to $140.2$ mg/dL, remaining above the $140$ mg/dL threshold for $51.2\%$ of the time.

\subsection{Evaluating the discrimination capability of functional cutpoints}

The following discusses the estimated sensitivities and specificities for a range of values of the variable $c$ when predicting diabetes prevalence and incidence. The analysis begins with sensitivity (as shown in Figure \ref{figc}, depicted in red), which decreases as $c$ increases. This reduction in sensitivity occurs because higher values of $c$ make the test more strict, classifying fewer individuals as positive for diabetes. Conversely, specificity (shown in blue) improves with higher values of $c$, indicating that more individuals without diabetes are correctly classified as negative.  The shaded regions around the curves represent the 95\% pointwise confidence intervals, which capture the uncertainty in the estimated performance measures at each value of $c$. Importantly, the uncertainty is greater for the sensitivity data than for the specificity data, likely reflecting differences in the sample sizes of the patient subgroups. This disparity in uncertainty underscores the need to carefully account for variability when evaluating a functional cut-off, especially in the context of a heterogeneous disease like diabetes.

\begin{figure}[!ht]
\centering
	\includegraphics[width=\textwidth]{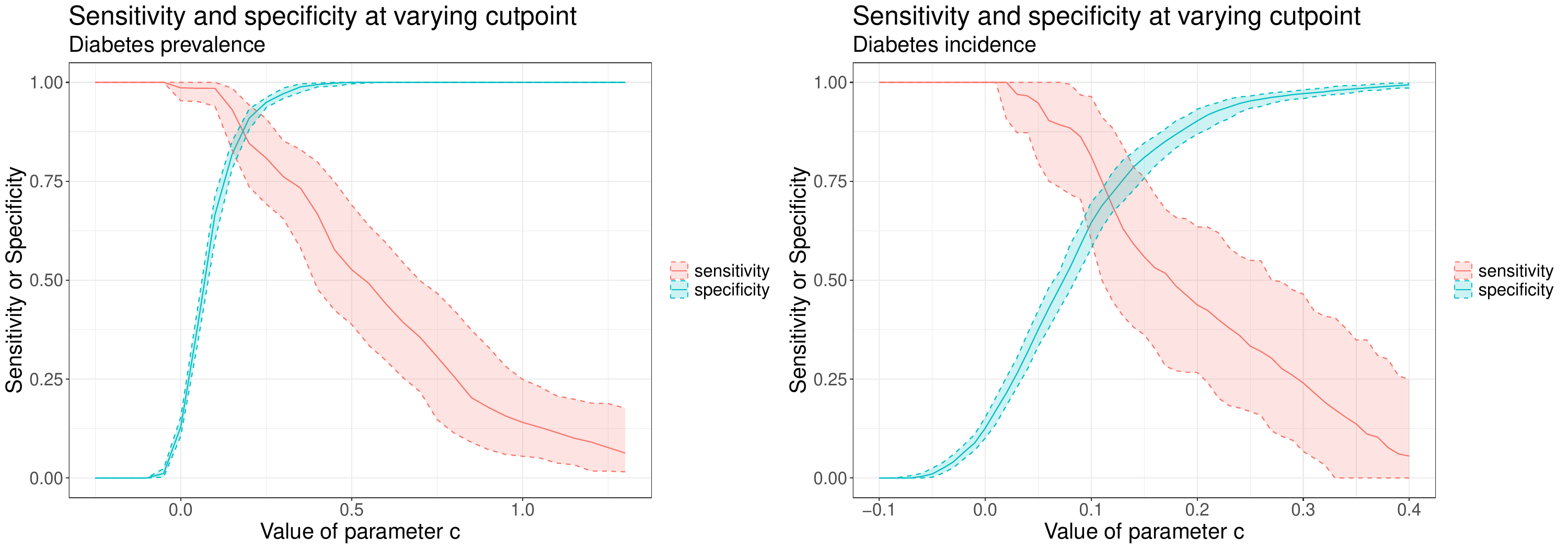}
	\caption{Sensitivity and specificity change across varying values of parameter $c$. We depict the values of the estimated metrics along with the $95\%$ pointwise confidence interval obtained based on the bootstrap procedure presented in algorithm 1.}
	\label{figc}
\end{figure}

By considering $specificities$ and $1 - sensitivities$ across the full range of $c$, the ROC curves shown in Figure \ref{fig5} are obtained. The estimated cut-off curves show better discrimination for diabetes prevalence than for diabetes incidence. While the prediction of diabetes prevalence shows an AUC of $0.952$ (CI$95\%: 0.918 - 0.976$), the prediction of future development of the disease shows a lower AUC of $0.791$  (CI$95\%: 0.698 - 0.864$).

\begin{figure}[!htb]
\centering
	\includegraphics[width=\textwidth]{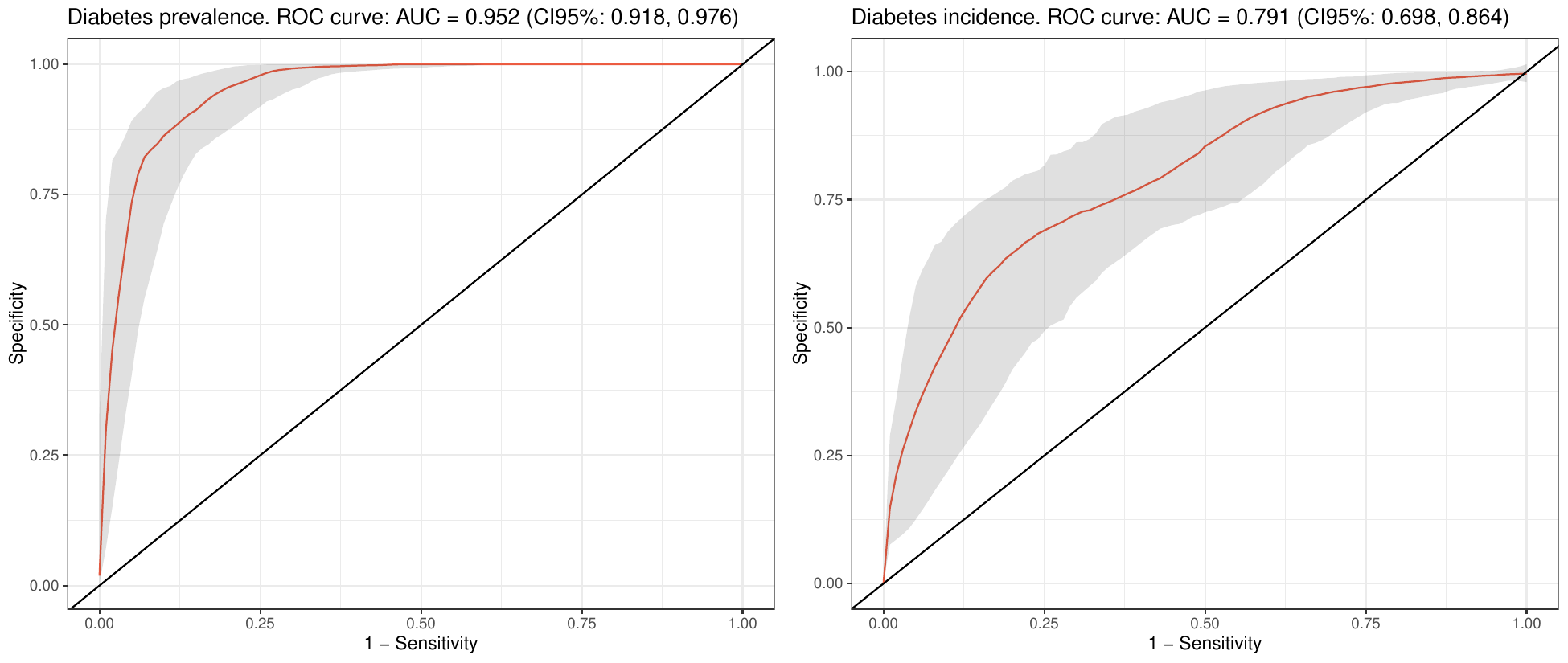}
	\caption{ROC curves, depicting specificities vs. ($1$ - sensitivities), obtained for a varying value of $c$ in the functional cut-point. Left plot shows the discrimination capability of the functional-cutpoint for diabetes prevalence, while right plot focus on predicting diabetes incidence. Confidence bands were estimated based on the bootstrap scheme presented in algorithm $1$.}
	\label{fig5}
\end{figure}

\subsection{Comparison with glycemic markers and CGM indexes}

This section compares the discriminatory capacity of the proposed distributional functional cut-off with various glycemic indices derived from CGM and the classic glycemic markers FPG and HbA1c. Table \ref{Tab_cutpoint} shows the sensitivity and specificity for the optimal cut-off as estimated under the Youden Index, maximum sensitivity and maximum specificity. For predicting diabetes prevalence, the functional cut-off shows a good, balanced performance under the Youden Index (sensitivity: $0.89$, specificity: $0.92$), comparable to the standard measures of HbA1c and FPG, which also demonstrate strong diagnostic capacity. 

The CGM indices (mean glucose [MG], standard deviation [SD], coefficient of variation [CV]  interquartile range [IQR], MAGE, CONGA, AUC and time above range [TAR]) show a poorer performance than the functional cut-off point under the Youden Index. While the maximum sensitivity values are perfect ($1.00$) across all methods, specificity is notably low for MG, IQR, MAGE, AUC and TAR, highlighting a trade-off when prioritizing sensitivity. Conversely, all methods reach perfect specificity under maximum specificity, but sensitivity varies, with HbA1c ($0.70$) and FPG ($0.57$) outperforming other metrics in maintaining balance. The functional cut-off sensitivity ($0.56$) is comparable to that for FPG. 

The results for diabetes incidence highlight the robust performance of the functional cut-off compared to the traditional variability indices FPG and HbA1c. Under the Youden Index, the functional cut-off achieves a sensitivity of $0.84$ and a specificity of $0.69$, balancing these metrics better than most other methods. If the Youden Index is calculated (sensitivity + specificity - 1), the functional cut-off obtains the best result ($0.53$). Notably, the functional cut-off outperforms MG (sensitivity: $0.88$, specificity: $0.60$) and SD (sensitivity: $0.84$, specificity: $0.54$), which both sacrifice specificity for sensitivity. For maximum sensitivity, all metrics achieve perfect sensitivity ($1.00$), but specificity is substantially higher for the functional cut-off ($0.28$) compared to the CV ($0.15$) and SD ($0.24$). This indicates that the functional cut-off maintains better overall discriminatory capacity, even when prioritizing sensitivity. For maximum specificity, the functional cut-off achieves perfect specificity ($1.00$), as do all other metrics, but sensitivity drops significantly ($0.03$). However, this trade-off is comparable to that seen for HbA1c (sensitivity: $0.04$) and MG (sensitivity: $0.00$).

\begin{table}[!ht]
\centering
\caption{Performance comparison of diagnostic metrics for diabetes prevalence and incidence using the Functional Cutpoint, glycemic variability indexes and glycemic markers. Sensitivity and specificity are evaluated under the Youden Index, maximum sensitivity, and maximum specificity criteria. For the functional cutpoint confidence intervals were estimated based on the bootstrap scheme presented in algorithm 1.}
\scalebox{0.60}{
\begin{tabular}{lcccccc}
 & \multicolumn{2}{c}{Youden Index} & \multicolumn{2}{c}{Maximum sensitivity} & \multicolumn{2}{c}{Maximum specificity} \\
 \cmidrule(lr){2-3} \cmidrule(lr){4-5} \cmidrule(lr){6-7}
 & Sensitivity & Specificity & Sensitivity & Specificity & Sensitivity & Specificity \\ \hline
\multicolumn{7}{l}{\textbf{DIABETES PREVALENCE}} \\
\cdashline{1-7}
\textbf{Functional Cutpoint} & \textbf{0.89 (0.78, 0.98)} & \textbf{0.92 (0.79, 0.99)} & \textbf{1.00 (1.00, 1.00)} & \textbf{0.10 (0.06, 0.80)} & \textbf{0.56 (0.42, 0.73)} & \textbf{1.00 (1.00, 1.00)} \\
MG & 0.77 (0.65, 0.86) & 0.94 (0.92, 0.96) & 1.00 (0.94, 1.00) & 0.00 (0.00, 0.01) & 0.48 (0.36, 0.60) & 1.00 (0.99, 1.00) \\
SD & 0.80 (0.68, 0.88) & 0.89 (0.86, 0.92) & 1.00 (0.94, 1.00) & 0.22 (0.19, 0.26) & 0.35 (0.24, 0.48) & 1.00 (0.99, 1.00) \\
CV & 0.81 (0.70, 0.89) & 0.73 (0.69, 0.77) & 1.00 (0.94, 1.00) & 0.12 (0.09, 0.15) & 0.17 (0.09, 0.28) & 1.00 (0.99, 1.00) \\
IQR & 0.72 (0.60, 0.82) & 0.93 (0.91, 0.95) & 1.00 (0.94, 1.00) & 0.00 (0.00, 0.01) & 0.34 (0.23, 0.46) & 1.00 (0.99, 1.00) \\
MAGE & 0.75 (0.63, 0.85) & 0.91 (0.89, 0.94) & 1.00 (0.94, 1.00) & 0.00 (0.00, 0.01) & 0.24 (0.14, 0.36) & 1.00 (0.99, 1.00) \\
CONGA & 0.73 (0.61, 0.83) & 0.92 (0.90, 0.94) & 1.00 (0.94, 1.00) & 0.24 (0.20, 0.28) & 0.16 (0.08, 0.28) & 1.00 (0.99, 1.00) \\
AUC & 0.82 (0.71, 0.90) & 0.93 (0.91, 0.95) & 1.00 (0.94, 1.00) & 0.00 (0.00, 0.00) & 0.48 (0.36, 0.60) & 1.00 (0.99, 1.00) \\
TAR ($\geq 180$ mg/dL) & 0.88 (0.78, 0.95) & 0.92 (0.90, 0.94) & 1.00 (0.94, 1.00) & 0.00 (0.00, 0.07) & 0.00 (0.00, 0.05) & 0.99 (0.98, 0.99) \\
TAR ($\geq 140$ mg/dL) & 0.85 (0.75, 0.92) & 0.88 (0.85, 0.91) & 1.00 (0.94, 1.00) & 0.00 (0.00,0.00) & 0.47 (0.35, 0.59) & 1.00 (0.99, 1.00) \\
FPG, mg/dL & 0.91 (0.82, 0.96) & 0.95 (0.93, 0.97) & 1.00 (0.94, 1.00) & 0.00 (0.00, 0.01) & 0.57 (0.44, 0.68) & 1.00 (0.99, 1.00) \\
HbA1c, \% & 0.92 (0.84, 0.97) & 0.94 (0.92, 0.96) & 1.00 (0.94, 1.00) & 0.37 (0.33, 0.42) & 0.70 (0.57, 0.80) & 1.00 (0.99, 1.00) \\ \hline
\multicolumn{7}{l}{\textbf{DIABETES INCIDENCE (7.5 YEARS)}} \\
\cdashline{1-7}
\textbf{Functional Cutpoint} & \textbf{0.84 (0.56, 0.96)} & \textbf{0.69 (0.56, 0.93)} & \textbf{1.00 (1.00, 1.00)} & \textbf{0.28 (0.16, 0.58)} & \textbf{0.03 (0.00, 0.13)} & \textbf{1.00 (0.99, 1.00)} \\
MG & 0.88 (0.68, 0.97) & 0.60 (0.56, 0.65) & 1.00 (0.86, 1.00) & 0.38 (0.33, 0.42) & 0.00 (0.00, 0.13) & 0.99 (0.98, 0.99) \\
SD & 0.84 (0.63, 0.95) & 0.54 (0.49, 0.58) & 1.00 (0.86, 1.00) & 0.24 (0.20, 0.28) & 0.00 (0.00, 0.13) & 0.99 (0.98, 0.99) \\
CV & 0.68 (0.46, 0.85) & 0.58 (0.54, 0.62) & 1.00 (0.86, 1.00) & 0.15 (0.12, 0.18) & 0.00 (0.00, 0.13) & 0.99 (0.98, 0.99) \\
IQR & 0.60 (0.38, 0.78) & 0.81 (0.77, 0.84) & 1.00 (0.86, 1.00) & 0.30 (0.26, 0.34) & 0.04 (0.00, 0.20) & 1.00 (0.99, 1.00) \\
MAGE & 0.96 (0.79, 0.99) & 0.43 (0.39, 0.48) & 1.00 (0.86, 1.00) & 0.39 (0.35, 0.44) & 0.00 (0.00, 0.13) & 0.99 (0.98, 0.99) \\
CONGA & 0.72 (0.50, 0.87) & 0.73 (0.69, 0.77) & 1.00 (0.86, 1.00) & 0.37 (0.33, 0.42) & 0.04 (0.00, 0.20) & 0.99 (0.98, 0.99) \\
AUC & 0.88 (0.68, 0.97) & 0.59 (0.55, 0.63) & 1.00 (0.86, 1.00) & 0.36 (0.32, 0.40) & 0.00 (0.00, 0.13) & 0.99 (0.98, 0.99) \\
TAR ($\geq 180$ mg/dL) & 0.40 (0.21, 0.61) & 0.85 (0.81, 0.88) & 1.00 (0.86, 1.00) & 0.00 (0.00, 0.00) & 0.00 (0.00, 0.13) & 0.99 (0.98, 0.99) \\
TAR ($\geq 140$ mg/dL) & 0.72 (0.50, 0.87) & 0.79 (0.75, 0.83) & 1.00 (0.86, 1.00) & 0.38 (0.34, 0.43) & 0.04 (0.00, 0.20) & 0.99 (0.98, 0.99) \\
FPG, mg/dL & 0.76 (0.54, 0.90) & 0.76 (0.72, 0.80) & 1.00 (0.86, 1.00) & 0.45 (0.41, 0.50) & 0.08 (0.00, 0.26) & 0.99 (0.98, 0.99) \\
HbA1c, \% & 0.72 (0.50, 0.87) & 0.80 (0.76, 0.83) & 1.00 (0.86, 1.00) & 0.14 (0.11, 0.18) & 0.04 (0.00, 0.20) & 0.99 (0.98, 0.99) \\ \hline
\end{tabular}}
\label{Tab_cutpoint}
\end{table}

Table \ref{Tab1} shows the AUC values for predicting diabetes prevalence and incidence. The proposed functional classifier consistently outperforms the other metrics, achieving a high AUC for diabetes prevalence ($0.952$) and the highest for diabetes incidence ($0.791$).

For diabetes prevalence, most of the glycemic indices show strong discrimination, with AUC values above $0.80$. The functional cut-off shows one of the highest discriminatory capacities (AUC = $0.952$), only surpassed by HbA1c (AUC = $0.979$). TAR ($>140$ mg/dL) and FPG show AUC values ($0.947$ and $0.945$ respectively), close to that of the functional cut-off. Other metrics, such the AUC ($0.937$) and CONGA ($0.905$), also perform well, while CV shows relatively weaker discrimination (AUC = $0.836$).

\begin{table}[!htb]
\caption{Discrimination capability of the functional cutpoint in comparison with the main glycemic indexes one in terms of AUC. For the functional cutpoint confidence intervals were obtained based on the bootstrap scheme presented in algorithm $1$.}
\centering
\begin{tabular}{lcc}
 & Diabetes Prevalence & Diabetes Incidence \\
 \hline
\textbf{Functional Cutpoint} & \textbf{0.952 (0.918, 0.976)} & \textbf{0.791 (0.698, 0.864)} \\
\cdashline{1-3}
MG & 0.885 (0.820, 0.950) & 0.734 (0.660, 0.807) \\
SD & 0.892 (0.845, 0.939) & 0.667 (0.578, 0.756) \\
CV & 0.836 (0.783, 0.889) & 0.598 (0.498, 0.698) \\
IQR & 0.859 (0.792, 0.925) & 0.689 (0.599, 0.779) \\
MAGE & 0.847 (0.781, 0.915) & 0.688 (0.611, 0.765) \\
CONGA & 0.905 (0.863, 0.946) & 0.713 (0.638, 0.788) \\
AUC & 0.937 (0.896, 0.979) & 0.716 (0.642, 0.790) \\
TAR (\textgreater{}180 mg/dL) & 0.924 (0.883, 0.964) & 0.545 (0.449, 0.641) \\
TAR (\textgreater{}140mg/dL) & 0.947 (0.921, 0.973) & 0.720 (0.647, 0.793)\\
\cdashline{1-3}
FPG & 0.945 (0.898, 0.992)  & 0.767 (0.704, 0.829) \\
HbA1c & 0.979 (0.961, 0.998) &  0.742 (0.652, 0.792) \\
 \hline
\end{tabular}
\label{Tab1}
\end{table}

For diabetes incidence, however, the discriminatory capacity of all metrics is generally lower, with reduced AUC values. The functional cut-off retains its lead, achieving the highest AUC for incidence ($0.791$), followed by FPG ($0.767$) and HbA1c ($0.742$). In general, the CGM-derived indices show AUCs below $0.75$ - far below the performance of the functional cut-off. This highlights the latters ability to identify individuals at higher risk of developing diabetes. In contrast, the metrics CV ($0.598$) and TAR ($>180$ mg/dL, $0.545$) show poor performance in predicting diabetes incidence, suggesting they may be less effective in capturing glycemic variability patterns associated with diabetes progression.

In conclusion, the proposed functional cut-off shows a robust and consistent performance in identifying patients with diabetes, as well as those at high risk of developing the disease in the future. The estimated AUC values are comparable to those of the best-performing CGM-derived indices and demonstrate superior predictive capacity, particularly for diabetes incidence. These findings suggest that the functional cut-off is a reliable and potentially superior metric for use in both clinical and research contexts, with significant potential for identifying individuals with diabetes as well as those at higher risk of developing the disease.

\subsection{Validation with an external cohort}

This section reports the validation of the estimated functional cut-off for predicting diabetes incidence in a new cohort of subjects. The proposed methodology was applied to a previously published dataset \citep{colas2019detrended} (provided as Supplementary Material). This study included $208$ subjects, all of whom were healthy at the start, with $17$ developing type $2$ diabetes by its end. Participants were selected from the Hypertension and Vascular Risk Outpatient Clinic at the University Hospital of Móstoles, Madrid, between January $2012$ and May $2015$. Inclusion criteria required participants to be aged between $18$ and $85$ years, with a prior diagnosis of essential hypertension and no history of diabetes mellitus or antidiabetic treatment.

At the start of the study, a CGM (iPro, MiniMed, Northridge, CA, USA) was used to record glucose levels for a minimum of $24$ h, with samples taken every $5$ min. Patients were subsequently followed-up every six months until either a diagnosis of type $2$ diabetes was made or the study ended.  Diagnoses were confirmed via FPG $\geq126$ mg/dL and/or HbA1c $\geq6.5$\%, with both criteria requiring confirmation in a second measurement. The median follow-up period was $33$ months (range: $6$ to $72$ months). During this period, $17$ new cases of diabetes were identified, with a median time to diagnosis of $33.8$ months (IQR: $24.1$ months).

\begin{figure}[!htb]
\centering
	\includegraphics[width=\textwidth]{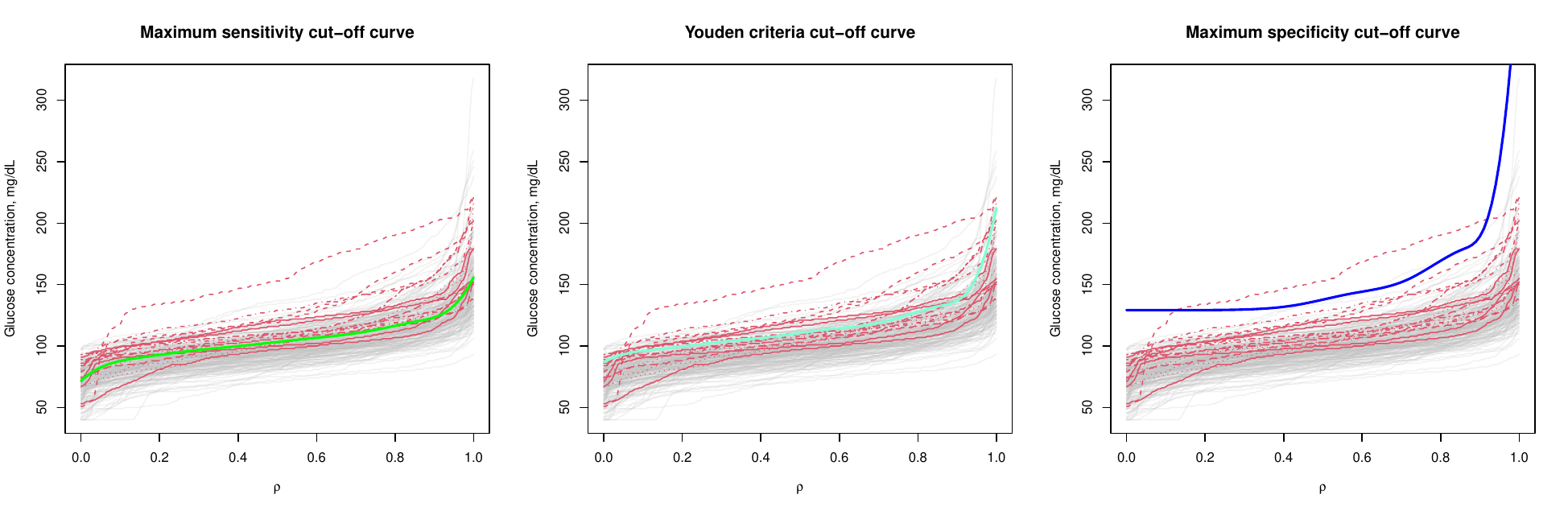}
	\caption{Estimated functional cut-off curves estimated using the AEGIS sample over the distributional representation of the CGM data collected by \citep{colas2019detrended}. Subjects who remained healthy at the end of the follow-up are depicted in grey, while those who developed diabetes are shown in red.}
	\label{fig8}
\end{figure}

Figure \ref{fig8} shows the distributional representation of the CGM data from \citep{colas2019detrended} alongside the estimated functional cut-off curves obtained as in Section \ref{sec}. The cut-off curves are shown on the same scale as the observed data. For maximum sensitivity, a sensitivity of $0.94$ ($95\%$ CI: $0.82$, $0.99$) and a specificity of $0.23$ ($95\%$ CI: $0.17$, $0.29$) was obtained. For the Youden Index, the estimated sensitivity and specificity was $0.76$ ($95\%$ CI: $0.56$, $0.96$) and $0.70$ ($95\%$ CI: $0.63$, $0.76$), respectively. Finally, maximum specificity identified only one case of diabetes incidence, achieving high specificity ($0.99$, $95\%$ CI: $0.98$, $1.00$) but low sensitivity ($0.05$, $95\%$ CI: $0.01$, $0.17$).

\section{Discussion}

This paper introduces a novel methodology for determining optimal cut-offs for emerging biomarkers in digital health applications, both for mathematical functions and probability distributions. The present approach leveraged high-resolution glucose time series data from CGM devices, with clinical outcome characterized using distributional quantile representations. The functional cut-off outperformed traditional CGM biomarkers such as the AUC, both in prevalence and incidence detection. To the best of our knowledge, this is the first instance in which a functional cut-off has been used to characterize diabetes mellitus using glucose time series data with a follow-up period involving nearly eight years. Ongoing CGM studies with healthy populations typically involve a reasonable number of subjects but have follow-up periods of less than four years. 

In the coming years, digital biomarkers are expected to play a crucial role in characterizing a range of metabolic disorders and diseases, as well as in assessing patient prognosis for conditions such as gestational diabetes, where traditional oral glucose tolerance tests are not always advisable. Recent studies have addressed biomarker assessment using ROC analysis within standard functional spaces, relying on restrictive linear and Gaussian process assumptions specifically designed for random functions in spaces defined by \( L^{2}([0,1])\) \citep{bianco2024roc}. However, such approaches are not suitable for distributional representations in digital health. The present work focuses on a more general non-parametric model applicable to any random object in a separable Hilbert space with a linear vector-valued structure or embedding, as in the case of probability distributions under the lens of $2$--Wasserstein metrics. The study specifically tackles the problem of optimal cut-off detection, which is essential for defining new digital biomarkers.

As digital health continues to evolve, the need to contemplate random objects defined in separable Hilbert spaces will likely become increasingly prevalent. The discussed methodology could be applied to a wide range of clinical problems across different domains (even medical imaging for neurodegenerative diseases), providing opportunities to take preventive measures and prescribe more informed treatments. Clinical trials too might benefit from its use \citep{beauchamp2020use}.


\section{Competing interests}
No competing interest is declared.

\section{Author contributions statement}
O.L.-B. was responsible for manuscript writing, performing the simulation study, and conducting data analysis. C.D.-L. contributed to the study design, manuscript revision, and result interpretation. M.M. developed the mathematical framework, wrote part of the manuscript, and contributed to the conceptual framework of the study. F.G. provided the data and supported its analysis.

\section{Acknowledgments}
This study has been funded by Instituto de Salud Carlos III (ISCIII) through the projects PI11/02219, PI16/01395, PI20/01069; by the European Union (Programme for Research and Innovation, 2021–2027; Horizon Europe, EIC Pathfinder: 101161509-GLUCOTYPES); and by the Network for Research on Chronicity, Primary Care, and Health Promotion, ISCIII, RD21/0016/0022; and co-funded by the European Union-NextGenerationEU. Ó.L-B was granted by ISCIII Support Platforms for
Clinical Research (ISCIII/ PT23/00118/Co-funded by European Union).

\bibliography{main}

\bibliographystyle{abbrvnat}

\end{document}